\documentclass{article}

\pdfoutput=1

\usepackage{amsmath,amsthm,amsfonts,amssymb}

\usepackage{graphicx,pstricks}

\newtheorem{theorem}{Theorem}
\newtheorem{corollary}[theorem]{Corollary}
\theoremstyle{definition}
\newtheorem{definition}[theorem]{Definition}
\theoremstyle{plain}
\newtheorem*{pruningscan}{Pruning scan}

\newcommand{\hshearbox}[3]{\scalebox{0.866025}[#2]{\rotatebox{210}%
{\scalebox{1.73205}[-0.57735]{\rotatebox{60}{\scalebox{-1.1547}[#1]{#3}}}}}}
\newcommand{\vshearbox}[3]{\scalebox{#2}[0.866025]{\rotatebox{210}%
{\scalebox{-0.57735}[1.73205]{\rotatebox{60}{\scalebox{#1}[-1.1547]{#3}}}}}}

\newcommand{\tilenamebox}[1]{\rput[B]{0}(0,-2.5pt){$#1$}}
\newcommand{\lockstack}[2]{\begin{array}{@{}c@{}}\begin{picture}(42,40)(4,0)
\put(25,35){\hshearbox{0.333333}{1}{\tilenamebox{#1}}}
\put(20,22.5){\tilenamebox{#1}}
\put(40,27.5){\vshearbox{-0.333333}{-1}{\tilenamebox{#1}}}
\put(20,07.5){\tilenamebox{#2}}
\put(40,12.5){\vshearbox{-0.333333}{-1}{\tilenamebox{#2}}}
\multiput(5,0)(0,15){3}{\line(1,0){30}\line(1,1){10}}
\put(05,30){\line(0,-3){30}\line(1,1){10}}
\put(35,30){\line(0,-3){30}}
\put(15,40){\line(1,0){30}\line(0,-1){30}}
\multiput(9,31)(2,0){13}{\hshearbox{1}{1}{\makebox(0,0)[lb]{\line(0,1){1.5}}}}
\multiput(17,39)(2,0){13}{\hshearbox{1}{1}{\makebox(0,0)[rt]{\line(0,-1){1.5}}}}
\end{picture}\end{array}}
\newcommand{\tilestack}[3]{\begin{array}{@{}c@{}}\begin{picture}(42,55)(4,0)
\put(25,50){\hshearbox{0.333333}{1}{\tilenamebox{#1}}}
\put(20,37.5){\tilenamebox{#1}}
\put(40,42.5){\vshearbox{-0.333333}{-1}{\tilenamebox{#1}}}
\put(20,22.5){\tilenamebox{#2}}
\put(40,27.5){\vshearbox{-0.333333}{-1}{\tilenamebox{#2}}}
\put(20,07.5){\tilenamebox{#3}}
\put(40,12.5){\vshearbox{-0.333333}{-1}{\tilenamebox{#3}}}
\multiput(5,0)(0,15){4}{\line(1,0){30}\line(1,1){10}}
\put(05,45){\line(0,-3){45}\line(1,1){10}}
\put(35,45){\line(0,-3){45}}
\put(15,55){\line(1,0){30}\line(0,-1){45}}
\multiput(9,46)(2,0){13}{\hshearbox{1}{1}{\makebox(0,0)[lb]{\line(0,1){1.5}}}}
\multiput(17,54)(2,0){13}{\hshearbox{1}{1}{\makebox(0,0)[rt]{\line(0,-1){1.5}}}}
\end{picture}\end{array}}
\newcommand{\rmtilestack}[3]{\begin{array}{@{}c@{}}\begin{picture}(50,55)(0,0)
\put(25,50){\hshearbox{0.333333}{1}{\tilenamebox{#1}}}
\put(20,37.5){\tilenamebox{#1}}
\put(40,42.5){\vshearbox{-0.333333}{-1}{\tilenamebox{#1}}}
\put(20,22.5){\tilenamebox{#2}}
\put(40,27.5){\vshearbox{-0.333333}{-1}{\tilenamebox{#2}}}
\put(20,07.5){\tilenamebox{#3}}
\put(40,12.5){\vshearbox{-0.333333}{-1}{\tilenamebox{#3}}}
\multiput(5,0)(0,15){4}{\line(1,0){30}\line(1,1){10}}
\put(05,45){\line(0,-3){45}\line(1,1){10}}
\put(35,45){\line(0,-3){45}}
\put(15,55){\line(1,0){30}\line(0,-1){45}}
\multiput(9,46)(2,0){13}{\hshearbox{1}{1}{\makebox(0,0)[lb]{\line(0,1){1.5}}}}
\multiput(17,54)(2,0){13}{\hshearbox{1}{1}{\makebox(0,0)[rt]{\line(0,-1){1.5}}}}
\put(00,2.5){\line(1,1){50}}
\put(50,2.5){\line(-1,1){50}}
\end{picture}\end{array}}
\newcommand{\tilerow}[4]{\begin{array}{@{}c@{}}\begin{picture}(222,55)(4,0)
\put(25,20){\hshearbox{0.333333}{1}{\tilenamebox{#1}}}
\put(55,20){\hshearbox{0.333333}{1}{\tilenamebox{#2}}}
\put(85,20){\hshearbox{0.333333}{1}{\tilenamebox{#3}}}
\multiput(115,20)(30,0){4}{\hshearbox{0.333333}{1}{\tilenamebox{#4}}}
\put(20,07.5){\tilenamebox{#1}}
\put(50,07.5){\tilenamebox{#2}}
\put(80,07.5){\tilenamebox{#3}}
\multiput(110,07.5)(30,0){4}{\tilenamebox{#4}}
\put(220,12.5){\vshearbox{-0.333333}{-1}{\tilenamebox{#4}}}
\multiput(5,15)(30,0){8}{\line(0,-1){15}\line(1,1){10}}
\put(5,0){\line(1,0){210}\line(1,1){10}}
\put(5,15){\line(1,0){210}}
\put(225,25){\line(0,-1){15}\line(-1,0){210}}
\multiput(0,0)(30,0){7}{%
\multiput(9,16)(2,0){13}{\hshearbox{1}{1}{\makebox(0,0)[lb]{\line(0,1){1.5}}}}}
\multiput(0,0)(30,0){7}{%
\multiput(17,24)(2,0){13}{\hshearbox{1}{1}{\makebox(0,0)[rt]{\line(0,-1){1.5}}}}}
\end{picture}\end{array}}

\title{Solving Mahjong Solitaire boards with peeking}
\author{Michiel de Bondt}

\allowdisplaybreaks

\begin{document}

\maketitle

\begin{abstract}
\noindent
We first prove that solving Mahjong Solitaire boards with peeking is NP-complete,
even if one only allows isolated stacks of the forms $aab$ and $abb$. We 
subsequently show that layouts of isolated stacks of heights one and two can always
be solved with peeking, and that doing so is in P, as well as finding an optimal
algorithm for such layouts without peeking.

Next, we describe a practical algorithm for solving Mahjong Solitaire 
boards with peeking, which is simple and fast. The algorithm uses an effective
pruning criterion and a heuristic to find and prioritize critical groups.
The ideas of the algorithm can also be applied to solving Shisen-Sho with peeking.
\end{abstract}

\noindent
Mahjong Solitaire is a game which is played with the 144 tiles of the Chinese
game Mahjong. The tiles are distributed in 36 groups of four tiles each. In the
beginning of the game, the tiles are stacked randomly in a predefined pattern, called
the \emph{layout}. The so-called {\em turtle layout} is used the most and therefore
called the default layout as well. After stacking the tiles, the object is to remove 
all tiles under certain rules. These rules are as follows.
\begin{itemize}

\item A tile is {\em playable}, if and only if no other tile is lying upon it, not 
even partially, and either its left side or its right side does not touch any other tile.

\item Only playable tiles may be played, but solely in pairs of tiles of the same
group. Thus removing all tiles takes 72 removals of pairs of similar tiles.

\end{itemize}
During game play, one cannot see tiles which are completely below other tiles. 
Sometimes a tile can be seen partially, namely where it is not covered by 
an other tile.

\section{Motivation}

Since the theoretical content of this article is not enough to justify 
publication, the idea of writing this article came somewhat later and is solely
motivated by the experience that people try to write similar algorithms for
both Mahjong Solitaire and Shisen-Sho, but fail to get them fast enough. 
Hence the theory in this article fulfills the main purpose of science, 
namely serving practice. 

The sensitivity of the performance of a program for Mahjong Solitaire to 
design is connected to the NP-completeness of the problem. For that reason, we
include an NP-completeness result as well. Although such a result has not been 
published yet in an official forum like this, the result that Mahjong Solitaire
with peeking is NP-complete is not new. For that reason, our result will be the
novelty that Mahjong Solitaire with peeking is already NP-complete when the layout 
only contains isolated stacks of height three.

\section{Complexity results}

When we generalize the number of groups from 36 to any natural number, we
get a game which can be subjected to complexity analysis. The following results
are known.

\begin{theorem}[A. Condon, J. Feigenbaum, C. Lund, and P. Shor, 
{\cite[Theorem 3.6]{nopeek}}]
Mahjong Solitaire is PSPACE-complete.
\end{theorem}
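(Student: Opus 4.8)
The plan is to establish the two halves separately: containment in PSPACE and PSPACE-hardness. For containment, the first thing I would do is make precise what ``solving without peeking'' means as a decision problem. Since the solver cannot see covered tiles, the natural formulation is a game of imperfect information: the solver (the existential player) chooses which playable pair to remove, while an adversary (the universal player) fixes the identity of each hidden tile at the moment it is uncovered, subject to consistency with the fixed multiset of groups. I would then observe that every play is short: each move deletes two tiles, so a layout with $n$ tiles terminates after at most $n/2$ moves, and every intermediate position is describable in polynomial space, with only polynomially many legal pairs at each step. Whether the solver has a strategy that guarantees clearing the board can therefore be settled by a depth-first evaluation of the polynomial-depth, polynomially-branching alternating game tree; equivalently the problem is decided by an alternating Turing machine in polynomial time, and hence by Chandra, Kozen and Stockmeyer it lies in PSPACE.

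The harder direction is PSPACE-hardness, which I would obtain by reducing from the truth problem for quantified Boolean formulas (TQBF). Given a formula $Q_1 x_1 \cdots Q_m x_m\, \varphi$ with $\varphi$ in conjunctive normal form, the idea is to build a layout whose forced play simulates the quantifier prefix from the outside in, and which clears completely if and only if the formula is true. Existential quantifiers are encoded by \emph{player} choices: a variable gadget offers the solver two symmetric ways to begin dismantling it, corresponding to the assignments $x_i = \mathrm{true}$ and $x_i = \mathrm{false}$. Universal quantifiers are encoded by the \emph{adversary's} control of hidden information: the deciding tile identities are buried, so that when they are uncovered the adversary selects the assignment and the solver must possess a continuation that succeeds for both choices. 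The heart of the construction is a collection of clause gadgets that remain blocked — leaving at least one tile permanently unplayable — unless some literal of the clause has been satisfied during the quantifier phase, so that the board can be finished exactly when $\varphi$ holds under the chosen assignment.

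The main obstacle will be realizing these gadgets within the rigid stacking geometry and the playability rule (a tile is playable only when it is uncovered and has a free left or right side), while simultaneously controlling precisely which tiles are hidden so that the information structure matches the $\exists/\forall$ alternation. In particular the gadgets must be ``well-timed'': the solver has to be forced to commit to each variable's value before the next quantifier block is engaged, and the adversary's hidden choices must be revealed in the correct order, so that the game tree of the layout is faithful to the quantifier prefix. Verifying that no unintended removal sequence can either shortcut this order or rescue an unsatisfied clause is where the bulk of the technical care lies; once the gadgets are shown to compose correctly, soundness and completeness follow by matching winning strategies in the Mahjong game with satisfying strategies for the prefix, and the whole layout is plainly constructible in polynomial time.
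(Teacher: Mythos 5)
This theorem is not proved in the paper at all: it is an imported result, quoted with attribution to Condon, Feigenbaum, Lund and Shor, and the paper immediately moves on to the peeking variant. So there is no in-paper argument to compare your proposal against, and any assessment has to be of your sketch on its own terms.

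On those terms, your overall shape (membership via a polynomially bounded game tree evaluated by alternation, hardness via a quantified-formula reduction) is the standard route for results of this kind, and your ``lazy adversary'' device --- letting the opponent fix a hidden tile's identity only at the moment it is uncovered, subject to consistency --- is the right way to make the imperfect information tractable for the PSPACE upper bound. But there are two substantive concerns. First, the proposal never gets past the gadget wish-list: for Mahjong Solitaire the entire difficulty of the hardness direction lies in realizing variable, quantifier and clause gadgets under the actual playability rule and in proving that no unintended removal order shortcuts them; you name this as ``where the bulk of the technical care lies'' but supply none of it, so the hardness half is a plan rather than a proof. Second, and more importantly, your model of the no-peeking game may not match the theorem being cited. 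In the cited reference the hidden tiles are dealt randomly, and Theorem 3.6 there concerns a stochastic game (a game against nature, in the setting of random debate systems), not a worst-case adversary; the corresponding hardness reduction goes through stochastic satisfiability rather than plain TQBF, and your universal-quantifier gadget, which relies on the adversary deliberately choosing the unfavourable reveal, does not directly implement a random quantifier. If the intended decision problem is ``can the solver guarantee a win against every consistent hidden configuration,'' your framework is appropriate but you would need to verify that this is in fact the statement being attributed; if it is the probabilistic version, the reduction has to be reworked around threshold probabilities.
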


\begin{theorem}[D. Eppstein, {\cite[Shanghai]{peek}}] \label{peekth}
Mahjong Solitaire is NP-complete when peeking is allowed.
\end{theorem}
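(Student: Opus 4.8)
The plan is to prove both membership in NP and NP-hardness, the former being exactly where peeking matters.

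For membership, I would observe that when peeking is allowed the entire board is fully observable, so a solution is simply a sequence of pair removals. If the layout has $n$ groups, any complete solution consists of exactly $2n$ removals, a certificate of size linear in the board. Given such a sequence, one verifies in polynomial time that at the moment of each removal the two chosen tiles belong to the same group and are both playable (uncovered, with a free left or right side) in the configuration obtained after the preceding removals. This should be contrasted with the no-peeking game: there, hidden tiles force a branching strategy rather than a single play sequence, which is what drives the complexity up to PSPACE. Full information collapses the certificate to polynomial length, and this collapse is precisely what gives membership in NP.

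For hardness, I would reduce from a suitable NP-complete problem, and $3$-satisfiability (or, to ease the geometric embedding, a planar variant) is the natural source. The governing mechanism of the game is a precedence structure: the covering rule forces tiles higher in a stack to leave before those below, and the ``free left or right side'' rule means that among horizontally adjacent tiles in a layer one must wait for a neighbour to depart. The pairing requirement --- a tile may leave only together with a matching partner that is \emph{simultaneously} playable --- couples distant parts of the board. The design goal is to turn these three mechanisms into logical constraints.

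Concretely, I would build a \emph{variable gadget} in which the only two legal early-removal orders correspond to the two truth assignments of a variable, each choice committing the gadget and releasing one of two ``output'' tiles that encode the literal's value; and a \emph{clause gadget} that can be fully cleared if and only if at least one of its three incoming literals has been set to the satisfying value, the unsatisfied case leaving a matched pair permanently blocked (each partner needing the other to go first). Fresh tile groups are allocated per gadget, which is legitimate in the generalized game with an unbounded number of groups, so the total number of tiles and groups stays polynomial in the formula size. I would then argue that a satisfying assignment yields a clearing order and, conversely, that any clearing order induces a consistent assignment via the forced choices in the variable gadgets.

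The hard part will be the gadget design and, above all, their composition: one must route the literal ``wires'' through a two-dimensional layered layout without creating unintended playability that would let a clause clear without being satisfied, or let a variable gadget commit in both directions at once. Controlling this cross-talk is why a planar or otherwise structurally restricted version of $3$SAT, together with careful use of covering stacks as one-way precedence channels, is likely to be needed; verifying that no stray removal order circumvents the intended constraints is the crux of the correctness argument.
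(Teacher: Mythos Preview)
The paper does not itself prove Theorem~\ref{peekth}; it is quoted as Eppstein's result. What the paper \emph{does} prove is the sharper Theorem~\ref{mahsolNP}, which implies Theorem~\ref{peekth}, so that is the natural point of comparison.

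Your membership argument is fine and matches the standard reasoning. For hardness, both you and the paper reduce from $3$-SAT via variable and clause gadgets, but the executions diverge in one essential way. You anticipate that the difficulty lies in embedding the gadgets in a two-dimensional layered board, routing literal ``wires'' between them, and preventing unintended playability from horizontal adjacency; you therefore reach for planar $3$-SAT. The paper sidesteps this entirely: its layout consists solely of \emph{isolated} stacks of height three (shapes $aab$ and $abb$), so there is no horizontal adjacency anywhere and the left/right playability rule never bites. All interaction between gadgets is mediated purely by shared tile groups, not by geometry. This turns the construction into a clean combinatorial one (chains of $x$-groups and $y$-groups gating a single SAT group, with $t_i$/$f_i$ chains encoding each variable's truth value), and no planarity hypothesis is needed.

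So your plan is not wrong, but it is aimed at a harder target than necessary: the cross-talk and routing issues you flag as ``the crux'' simply do not arise once one realises that isolated stacks suffice. That realisation is the idea you are missing, and it is what makes the paper's proof both short and stronger (it yields hardness even for this very restricted class of layouts).
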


\noindent
We only consider the variant with peeking of Mahjong Solitaire in this article.
In order to be able to include a complexity result and to be original at the
same time, we refine theorem \ref{peekth}, which results in the below theorem.

\begin{theorem} \label{mahsolNP}
Solving Mahjong Solitaire boards with peeking is NP-complete, even 
if one only allows isolated stacks of the forms
$$
\tilestack{a}{a}{b} \qquad \mbox{and} \qquad \tilestack{a}{b}{b}
$$
(We write the tile names on the sides to make peeking easier.)
\end{theorem}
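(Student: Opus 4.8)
The plan is to show membership in NP and then prove NP-hardness by a reduction from 3-SAT; one could alternatively reduce from the general peeking problem of Theorem~\ref{peekth}, but I expect a direct encoding to give tighter control over the stack shapes. Membership is immediate: since peeking reveals the whole board, a certificate is just the sequence of pair-removals (linearly many in the number of tiles), and one verifies it in polynomial time by simulating the moves and checking at each step that the two chosen tiles are exposed, of the same group, and on distinct stacks. For the hardness direction I first record the combinatorial model forced by isolation: an isolated stack is never blocked from the side, so its tiles must be removed strictly top to bottom, and every legal move pairs the current top tiles of two \emph{different} stacks carrying the same group label. Each group owns exactly four tiles, which I must partition into ``doubles'' (the two stacked copies inside an $aab$ or $abb$ stack) and ``singles'' (the lone copy of the other group), realizing each group as two doubles, one double with two singles, or four singles.

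The engine of the reduction is the asymmetry between the two allowed shapes. Consider confining all four copies of a group $x$ to the two stacks
$$
\tilestack{x}{x}{z} \qquad \tilestack{y}{x}{x}.
$$
In the left stack the doubled $x$ sits on top and is exposed immediately, whereas in the right stack the same doubled $x$ is gated behind the single $y$. Since only these four $x$-tiles exist, they can be cleared only in the forced order: play $y$ first (pairing it externally), which exposes the bottom $x$'s; then pair off the two tops twice; and only now is $z$ exposed. This realizes a clean precedence chain $y \to x \to z$, meaning ``$z$ cannot be reached until $y$ has been played.'' Chaining such blocks, I would build variable gadgets that present a binary choice (which of two competing doubles receives its external partner first) and clause gadgets whose bottom singles become playable exactly when at least one satisfying literal has been played, wiring literals to clauses through shared single-groups, so that a schedule clearing every stack exists if and only if the formula is satisfiable.

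The correctness argument then splits in the usual way: a satisfying assignment dictates an order of play that empties the board, and conversely any clearing schedule induces a consistent assignment because the precedence chains forbid contradictory orders. The hard part will be the bookkeeping that pins every group to exactly four tiles while simultaneously forbidding unintended cross-pairings: because any two exposed tiles of the same group may be matched, each gadget group must be confined to its own constant-size cluster of stacks, and I must verify that no ``shortcut'' pairing across gadgets lets the solver cheat. I expect the clause gadget, where several literals feed one clause and partial progress must not prematurely unlock the rest of the board, to be the delicate case, together with checking that all leftover singles can always be cleaned up in pairs at the end (if necessary by adding balancing stacks) so that the four-tiles-per-group constraint is respected globally.
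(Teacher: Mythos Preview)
Your approach is the same as the paper's: a direct 3-SAT reduction built on the precedence $y\prec z$ forced by confining all four $x$-tiles to the pair of stacks $xxz$ and $yxx$, with variable gadgets offering a binary choice and clause gadgets demanding one satisfied literal. The paper executes this with explicit chained gadgets---an $x$-chain and a $y$-chain gated by a single ``SAT'' group, variable groups $V_i$ branching into true/false chains $t_{i,\cdot}$, $f_{i,\cdot}$, and clause groups $C_j$ fed by extending those chains---so the difficulties you anticipate (fanning a variable out to several clauses, ruling out shortcut pairings, and guaranteeing post-SAT cleanup while keeping every group at exactly four tiles) are precisely where the paper's construction does its work.
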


\begin{proof}
We reduce from 3-SAT. Our construction consists of three steps.
\begin{itemize}

\item \emph{Step 1: initial setup.} \\
We set up the following eight tile stacks.
\begin{gather*}
\tilestack{x_1}{x_1}{x_2}\tilestack{x_2}{x_1}{x_1}\tilestack{x_2}{x_2}{\textrm{SAT}} \\
\tilestack{\textrm{SAT}}{y_1}{y_1}\tilestack{y_1}{y_2}{y_2}\tilestack{y_1}{y_3}{y_3}
\tilestack{y_2}{y_2}{\textrm{SAT}}\tilestack{y_3}{y_3}{\textrm{SAT}}
\end{gather*}
For each $x$-group $x_i$, there is exactly one stack with two $x_i$-tiles on top. 
In order to release the third tile of such a stack, it is necessary to free
the other two $x_i$-tiles. This property will be maintained in the other steps.

For each $y$-group $y_j$, there is exactly one stack with two $y_j$-tiles below. 
In order to release tiles below one of the other two $y_j$-tiles, it is 
necessary to remove the top tile of the stack with two $y_j$-tiles below first, 
since otherwise both $y_j$-tiles of this stack cannot be freed any more. This 
property will be maintained in the other steps as well.

The SAT group is a key group. The ultimate problem (which corresponds to solving
the instance of 3-SAT) will be to remove the first pair
of the SAT group, which can only consist of the SAT tile below two tiles of an 
$x$-group and the SAT tile above the two $y_1$-tiles, since the other two SAT-tiles
are blocked indirectly by the SAT tile which is above the two $y_1$-tiles. 
After removing the first SAT pair, it will be easy to clear the board. 

\item \emph{Step 2: adding variables.} \\
For each variable $V_i$, we perform the following. Let $m$ be the
largest index of the $x$-groups and $n$ be the largest index of the
$y$-groups. Remove two tile stacks and add fourteen
tile stacks as drawn below.
\begin{gather*}
\rmtilestack{x_m}{x_m}{\textrm{SAT}} ~~ \tilestack{x_m}{x_m}{x_{m+2}}
\tilestack{V_i}{x_{m+1}}{x_{m+1}} \tilestack{x_{m+1}}{x_{m+1}}{x_{m+2}}
\tilestack{x_{m+2}}{x_{m+2}}{\textrm{SAT}} \\
\rmtilestack{y_n}{y_n}{\textrm{SAT}} ~~ \tilestack{y_n}{y_{n+1}}{y_{n+1}}
\tilestack{y_{n+1}}{y_{n+1}}{V_i} \tilestack{y_n}{y_{n+2}}{y_{n+2}}
\tilestack{y_{n+2}}{y_{n+2}}{\textrm{SAT}} \\
\tilestack{V_i}{t_{i,2}}{t_{i,2}} \tilestack{t_{i,1}}{t_{i,1}}{t_{i,2}} 
\tilestack{t_{i,2}}{t_{i,1}}{t_{i,1}} \quad
\tilestack{V_i}{f_{i,2}}{f_{i,2}} \tilestack{f_{i,1}}{f_{i,1}}{f_{i,2}}
\tilestack{f_{i,2}}{f_{i,1}}{f_{i,1}}
\end{gather*}
One can show by induction that in order to release the first SAT pair,
one must remove the $V_i$-tile which is above two tiles of an $x$-group,
for all $i$. Furthermore, one can only remove the tile $V_i$ below two tiles 
of a $y$-group after the first SAT pair is freed, for all $i$. 

Thus for each $i$,
one has to choose between the $V_i$ above two tiles of a $t_i$-group and the $V_i$ 
above two tiles of a $f_i$-group, for removal along with the $V_i$-tile above two 
tiles of an $x$-group.
Removing the $V_i$-tile above two tiles of a $t_i$-group first corresponds to setting
$V_i$ to \emph{true} and removing the $V_i$-tile above two tiles of a $f_i$-group first corresponds to setting
$V_i$ to \emph{false}.

For each $t_i$-group and each $f_i$-group, there is exactly one stack with
two tiles of the group below, just like with each $y$-group.  This property is
maintained in step 3. Thus the $t_i$-groups 
and $f_i$-groups act in a similar manner as the $y$-groups.

\item \emph{Step 3: adding clauses.} \\
For each clause $C_j$, we perform the following. Let $m$ be the
largest index of the $x$-groups. First, we remove one tile stack and 
add four tile stacks as drawn below.
$$
\rmtilestack{x_m}{x_m}{\textrm{SAT}} ~~ \tilestack{x_m}{x_m}{x_{m+2}}
\tilestack{C_j}{x_{m+1}}{x_{m+1}} \tilestack{x_{m+1}}{x_{m+1}}{x_{m+2}}
\tilestack{x_{m+2}}{x_{m+2}}{\textrm{SAT}}
$$
Subsequently, for each of the at most three variables $V_i$ in $C_j$, we do the following.
\begin{itemize}

\item If $V_i$ appears in a \emph{positive} literal, let $k$ be
the largest index of the $t_i$-groups. Remove one tile stack and add
four tile stacks as drawn below.
$$
\rmtilestack{V_i}{t_{i,k}}{t_{i,k}} ~~ \tilestack{V_i}{t_{i,k+2}}{t_{i,k+2}}
\tilestack{t_{i,k+2}}{t_{i,k}}{t_{i,k}} \tilestack{t_{i,k+2}}{t_{i,k+1}}{t_{i,k+1}}
\tilestack{t_{i,k+1}}{t_{i,k+1}}{C_j}
$$

\item If $V_i$ appears in a \emph{negative} literal, let $k$ be
the largest index of the $f_i$-groups. Remove one tile stack and add
four tile stacks as drawn below.
$$
\rmtilestack{V_i}{f_{i,k}}{f_{i,k}} ~~ \tilestack{V_i}{f_{i,k+2}}{f_{i,k+2}}
\tilestack{f_{i,k+2}}{f_{i,k}}{f_{i,k}} \tilestack{f_{i,k+2}}{f_{i,k+1}}{f_{i,k+1}}
\tilestack{f_{i,k+1}}{f_{i,k+1}}{C_j}
$$

\end{itemize}
If we repeat adding literals to make the number of times a variable is treated 
equal to three for each clause 
(in case the definition of 3-SAT does not require exactly three literals), 
then for each $j$, all four $C_j$-tiles are used.
 
One can show by induction that in order to release the first SAT pair,
one must remove the $C_j$-tile which is above two tiles of an $x$-group,
for all $j$. To do so, one of the other three $C_j$-tiles has to be released.

For each literal of $C_j$ which evaluates to true (with respect to the choice of 
the pair to be removed first in the group of the literal's variable), 
one $C_j$-tile can be released.
Other $C_j$-tiles cannot be freed before removing the first SAT-pair.
Thus the $C_j$-tile which is above two tiles of an $x$-group can only be released
if $C_j$ evaluates to true.

\end{itemize}
In order to remove the first two SAT-tiles, all variables $V_i$ must be assigned
boolean values and all clauses $C_j$ must be satisfied.
After removing the first SAT-pair, all remaining $V_i$-tiles can be released, after which
all remaining $C_j$-tiles can be freed. All other tiles are removed along with this.
Hence our reduction from 3-SAT is complete.
\end{proof}

\noindent
With one level Mahjongg solitaire, all tiles are on the same level and can therefore
already be seen without peeking.

\begin{corollary}
One level Mahjong Solitaire is NP-complete.
\end{corollary}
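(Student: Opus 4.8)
The plan is to prove both directions of NP-completeness while leaning on Theorem~\ref{mahsolNP} for the hard direction. Membership in NP is immediate: a winning game is a sequence of legal pair-removals of length at most the number of tiles, hence a certificate of polynomial length whose legality is checkable in polynomial time, and on a single level every tile is visible so the peeking and non-peeking versions coincide. For hardness I would reduce not from 3-SAT directly but from the restricted problem shown NP-complete in Theorem~\ref{mahsolNP}, namely boards consisting solely of isolated height-three stacks of the forms $aab$ and $abb$. This lets me reuse the entire clause/variable logic of that proof and reduces the task to a single local question: how to realise one isolated height-three stack as a configuration of tiles lying in one level while preserving exactly its removal-order constraints.

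The key step is the flattening gadget. In the original construction the only role of the third dimension is to force, within each stack, that the covering tiles are removed before the tiles they cover; stacks never interact geometrically, only through the matching of equal groups. I would reproduce each such covering relation on one level by replacing vertical covering with horizontal side-blocking, using the half-tile offset in the board's vertical direction that Mahjong layouts already permit. Two tiles placed with a half-row offset touch each other's sides only partially, so a given side of a tile is \emph{free} only once all of its (at most two) offset neighbours have been removed; this reproduces, on a single level, precisely the ``not even partially'' all-or-nothing condition that the upon-relation imposes in three dimensions. Concretely, I would map the three tiles of a stack to three tiles in one level whose side-blocking poset is isomorphic to the original covering poset, keeping every tile's group name and every group's multiplicity unchanged so that the global matching structure, and hence the encoded 3-SAT instance, is untouched.

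Because the stacks are isolated, I can place the resulting gadgets pairwise non-adjacent, so the only interaction between gadgets remains the removal of same-group pairs, exactly as in Theorem~\ref{mahsolNP}; the flattened board is therefore solvable if and only if the original stacked board is, if and only if the 3-SAT instance is satisfiable. I expect the main obstacle to be verifying gadget correctness rather than the overall bookkeeping: I must rule out that the bottom tile of a flattened stack can be played early through an unintended free side, which would short-circuit the forced order. This is exactly where the half-offset ``AND''-blocking is needed --- the bottom tile stays side-blocked until two offset neighbours are gone, and those neighbours are in turn blocked until the top tile is removed --- and I would confirm it by checking that the set of initially playable tiles and the entire unblocking dynamics of the gadget coincide with those of the isolated three-tile stack it replaces.
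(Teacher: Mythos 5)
There is a genuine gap in your flattening gadget: you insist on mapping the three tiles of a stack to exactly three tiles on one level, ``keeping every tile's group name and every group's multiplicity unchanged,'' but no such three-tile arrangement can reproduce the stack's dynamics. On a single level, consider the tile of a gadget with the smallest horizontal coordinate and the one with the largest: the first has a free left side and the second a free right side, and since nothing lies on top of anything, both are playable immediately. Hence any isolated one-level gadget with at least two tiles has at least \emph{two} initially playable tiles, whereas the stack $aab$ has exactly one. The half-row offset does give you the ``AND'' blocking you describe, but it cannot rescue this counting obstruction: to block both sides of the bottom tile $b$ and of the middle tile you would need tiles on the far side of the gadget, and you have none to spare. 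Even in the most favourable three-tile arrangement, where the two playable tiles are the two $a$'s, the reduction of Theorem~\ref{mahsolNP} breaks: for the stack with two $x_1$-tiles on top, the construction relies on the fact that the top $x_1$ can only be matched with an $x_1$ from the \emph{other} stack (the one with two $x_1$'s underneath an $x_2$), which forces that $x_2$ to be removed first. If both $x_1$'s of the first gadget are simultaneously playable, you match them with each other at once and release the tile beneath them without ever touching the other stack, destroying the dependency chain that encodes the formula.

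The missing idea is that you must add blocker tiles from a fresh group per stack, together with an argument that these blockers are inert until the rest of the gadget is cleared. The paper does this by replacing each stack $aab$ with the one-level row $a\,a\,b\,c\,c\,c\,c$, where $c$ is a new group whose four tiles all sit consecutively at the right end: only the rightmost $c$ is ever playable until the row has been eaten down to the $c$'s from the left, so no $c$-pair can be matched prematurely, the right end of the row is frozen, and the tiles $a,a,b$ become playable from the left in exactly the order imposed by the original stack; once $a,a,b$ are gone the four $c$'s clear themselves. Your reduction would go through if you replaced your three-tile gadget with such a row (or any gadget supplying a self-locking blocker group), and verified, as you propose, that the set of playable tiles and the unblocking dynamics then coincide with those of the stack.
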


\begin{proof}
Replace each stack of theorem \ref{mahsolNP} by a row of tiles, as follows.
\begin{gather*}
\rmtilestack{a}{a}{b} ~~ \tilerow{a}{a}{b}{c} \\
\rmtilestack{a}{b}{b} ~~ \tilerow{a}{b}{b}{c}
\end{gather*}
Here, every stack of theorem \ref{mahsolNP} gets a new tile group $c$, of course. 
\end{proof}

\noindent
Now that we have NP-completeness with stacks of height three, it seems
natural to look what happens with stack of smaller heights.

\begin{theorem}
If the layout only consists of isolated stacks of heights one and two, 
then one can always remove all tiles of the Mahjong Solitaire game when peeking 
is allowed. 

Without peeking, an optimal strategy for a Mahjong Solitaire layout as above 
is to repeat the following. First choose an arbitrary group which has a match, 
i.e.\@ a pair of playable tiles. Next remove a match of that group with the maximum
number of tiles not on the ground.
\end{theorem}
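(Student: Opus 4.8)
The plan is to model each height-two stack as a directed edge from its top group to its bottom group, obtaining a multidigraph $D$ on the groups (height-one stacks contribute isolated tiles), and to record the parity invariant that, since every match removes two tiles of a single group and each group starts with four, every group always has an even number of remaining tiles. The first target is a \emph{deadlock characterization}: a nonempty position admits no legal move at all exactly when it is a disjoint union of directed cycles of two-tile groups (self-loops included), i.e.\ configurations assembled from pieces such as
$$
\lockstack{a}{a} \qquad\mbox{and}\qquad \lockstack{a}{b}\,\lockstack{b}{a}.
$$
I would prove this by counting: in a stuck position every group has at most one playable tile, so a group that occurs as a top occurs exactly once and carries no single; writing $H$ for the number of height-two stacks and $t_g,b_g$ for the numbers of top and bottom tiles of group $g$, the identities $\sum_g t_g=\sum_g b_g=H$ together with $t_g\le 1$ and the evenness of each group's tile count force every surviving group to have exactly one top and one bottom, which is precisely a union of directed cycles. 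Such a ``pure cycle'' can never be cleared, since exposing any of its buried tiles first requires removing the top directly above it, and chasing this demand around the cycle returns to the start; hence a position is solvable only if it contains no pure-cycle sub-configuration.

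For the peeking statement I would then argue by strong induction on the number of tiles, maintaining the invariant that the position contains no pure-cycle sub-configuration. The initial layout satisfies it because every group has four tiles whereas a pure cycle needs groups with exactly two. The inductive step is the claim that from any invariant-satisfying nonempty position there is a \emph{safe} match, one whose removal preserves the invariant; removing it drops the tile count by two and the induction closes, giving full clearance and hence solvability with peeking. Since a safe match only ever removes currently playable (hence visible) tiles and merely reacts to the tiles that play itself exposes, it in fact uses no information beyond what ordinary play reveals.

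For the no-peeking optimality statement I would show that the described greedy rule — take any group with a match, then remove a match maximizing the number of tiles not on the ground — realizes such safe clearances, via an exchange argument: a ground tile removed while a top of the same group is available can always be deferred, because removing the top instead exposes the tile beneath it and can only enlarge the set of subsequently playable tiles, so preferring tops dominates any alternative play. The arbitrariness of the group is supported by the fact that a match available in one group is never destroyed by removals in another (those removals delete only foreign tiles and can only uncover tiles), so no ready group ever needs to be prioritized over another.

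The step I expect to be the main obstacle is the existence of a safe match, equivalently showing that the top-preferring greedy never strands tiles in a cyclic block. The sharp case is a four-tile group carrying three playable tops over one buried tile: any top-heavy match peels two of its three tops and is \emph{forced} to leave it as a single top over a single buried tile, which can close a cycle like $\lockstack{a}{b}\,\lockstack{b}{a}$ once the remaining groups are cleared. I would therefore prove the safe-move lemma by selecting the tops to remove so that the residual tile-level blocking relation stays acyclic, and then show that the greedy's ``maximum non-ground'' criterion can always be resolved to respect that selection (and that it matters only in the three-tops case just isolated); verifying this compatibility, and thereby fusing the peeking induction with the blind greedy, is the delicate point on which both halves of the theorem turn.
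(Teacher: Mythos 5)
Your overall architecture for the peeking half --- characterize stuck positions as disjoint unions of directed cycles on two-tile groups via a counting argument, then induct on a ``no blocked cycle'' invariant by always playing a safe match --- is exactly the paper's (its definition of a blocked cycle and the accompanying theorem on stacks of heights one and two). But the one step you yourself flag as ``the main obstacle,'' the existence of a safe match, is precisely the content of the paper's key lemma, and you leave it unproven; a plan to ``select the tops so that the residual blocking relation stays acyclic'' is a restatement of the goal, not an argument. The paper's proof goes: if playing a match of $g$ would create a blocked cycle, then exactly one $g$-tile would remain on the ground, so one can switch to a different match of $g$ unless exactly three $g$-tiles are not on the ground; in that remaining case every other group of the would-be cycle is already down to two tiles with one of them on the ground, so the blocked cycle through $g$ is unique up to cyclic shift and contains only one of the three non-ground $g$-tiles, and playing that particular tile together with another $g$-tile breaks it. You need to supply this (or an equivalent) argument before the induction closes.

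The more serious problem is your plan for the no-peeking half. You propose to show that the blind greedy's ``maximum non-ground'' rule ``can always be resolved to respect'' the safe selection, i.e.\ to fuse it with the peeking induction. That cannot work: in the very case you isolate (three playable tops of a four-tile group, the fourth tile on the ground but itself possibly buried under another stack), the safe choice depends on which tiles lie \emph{under} the three tops, and without peeking those three stacks are indistinguishable. The theorem does not claim the blind strategy always succeeds; it claims it is \emph{optimal}. The paper's argument is exactly that the only position in which the prescribed strategy can remove a wrong pair is this three-tops situation, and there no strategy can do better because no information distinguishes the three candidates. Your exchange argument showing that non-ground tiles dominate ground tiles is a sensible ingredient for reducing to that situation, but you must conclude with this indistinguishability observation rather than trying to prove the blind greedy is always safe.
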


\begin{proof}
Assume the layout only consists of isolated stacks of heights one and two. 
Since there are four 
tiles of each group, there cannot be a {\em blocked cycle}, see definition 
\ref{blockedcycle} below. Hence by theorem \ref{mahsol2} below, all tiles can be 
removed when peeking is allowed.

Without peeking, one can only remove the wrong tiles of a group when exactly three
of the four tiles are not on the ground, in case the formulated strategy is obeyed. 
But in that case, there is no information
that might lead to a best guess, since all three stacks of height two with
a tile of the above group on top look the same.
\end{proof}

\noindent
When playing Mahjong Solitaire, the number of tiles of each group
does not stay equal to four, but will become two first and zero later when
all tiles are removed. Hence we formulate the following definition.

\begin{definition} \label{blockedcycle}
Assume the layout only consists of isolated stacks of heights one and two,
and the number of tiles of each group is two or four. A \emph{blocked
cycle} is a subset $\{p_1, p_2, \ldots, p_k\}$ of distinct groups of size {\em two}
which are stacked as follows, where $k \ge 1$.
$$
\lockstack{p_1}{p_2} \lockstack{p_2}{p_3} \ldots 
\lockstack{p_{k-1}}{p_k} \lockstack{p_k}{p_1}
$$
In the somewhat degenerate case $k = 1$, there is only one stack, 
one with two $p_1$-tiles.
\end{definition}

\noindent
Notice that the board cannot be solved when a blocked cycle is present.
By the below theorem, we see how to play with stacks
of heights one and two when peeking is allowed.

\begin{theorem} \label{mahsol2}
If the layout only consists of isolated stacks of heights one and two, and all groups
have two or four tiles, then one can remove all tiles of the Mahjong Solitaire 
game when peeking is allowed, if and only if no blocked cycle is present.

Furthermore, for each group which has a match, 
at least one of the matches can be played without introducing a new blocked cycle.
\end{theorem}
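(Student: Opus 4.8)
I would first translate the rules into the clean combinatorial picture forced by the isolated stacks: since no tile touches a horizontal neighbour, the side-clause is vacuous and a tile is playable exactly when nothing sits on top of it, so the playable tiles are precisely the top tiles of the stacks and a group has a \emph{match} iff at least two of its tiles are playable. The ``only if'' direction is then immediate: a blocked cycle $\{p_1,\dots,p_k\}$ is self-contained, because each $p_i$ has one tile atop a $p_{i+1}$-tile and its other tile pinned under a $p_{i-1}$-tile, and all of these belong to size-two groups, so nothing outside the cycle can ever unpin them. At the first moment one tried to remove a cycle tile, all cycle tiles would still be in place, so the lower $p_i$-tile is blocked and no $p_i$ has a match; hence the cycle tiles are never removable and the board is unsolvable. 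Since peeking merely grants full information, solvability is a purely combinatorial question, and it remains to prove the converse together with the ``furthermore'' clause.

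To organize the rest I would encode each height-two stack, with top group $a$ and bottom group $b$, as a directed edge $a\to b$ in a digraph $D$ on the groups; then $\operatorname{outdeg}(g)$ counts the $g$-tiles lying on top of something and $\operatorname{indeg}(g)$ counts the $g$-tiles pinned beneath something, so $g$ has a match exactly when $\operatorname{size}(g)-\operatorname{indeg}(g)\ge 2$. Call a size-two group a \emph{link} if it has one pinned tile and one top tile. Each tile carries at most one other tile, so in the restriction of $D$ to link groups every in- and out-degree is at most one; that restriction is therefore a disjoint union of simple directed paths and cycles, and a blocked cycle is precisely one of these cycles. Thus ``no blocked cycle present'' says exactly that the restricted digraph is a union of paths.

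The converse rests on two steps. First I would show that whenever tiles remain and no blocked cycle is present, some group has a match. The plan is a contradiction by degree counting: if every size-two group had $\operatorname{indeg}\ge 1$ and every size-four group had $\operatorname{indeg}\ge 3$, then comparing $\sum_g\operatorname{indeg}(g)$ (the number of height-two stacks) with the total tile count forces there to be no ground tiles, no size-four groups, and every remaining group to have in- and out-degree exactly one, i.e.\ the link digraph is a union of cycles, producing a blocked cycle and a contradiction. Second, the ``furthermore'' clause, which I expect to be the main obstacle, is what lets an induction on the number of tiles go through: by the first step some group has a match, and if one can always play a match preserving acyclicity and the size-two-or-four condition while dropping the tile count, the board is cleared.

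For the ``furthermore'' clause I would argue that playing a match only frees tiles, so it creates no new link group except possibly $g$ itself (when $g$ falls from size four to size two); hence a newly created blocked cycle must run through $g$, entering along the edge from the group $b$ above $g$'s retained pinned tile and leaving along the edge to the group $a$ below $g$'s retained top tile. Since pinned tiles cannot be removed, the retained tiles are all the pinned tiles of $g$ together with enough top tiles; inspecting the possibilities, $g$ fails to become a link---and the match is safe---whenever it keeps a ground tile, keeps no top tile, or keeps no pinned tile, which covers every case except a size-four $g$ with one pinned and three top tiles. The hard part is this forced case, where I would use that the old link digraph is a union of paths: keeping the top tile over $a$ closes a cycle only if $a$ is an ancestor of $b$ along the unique chain feeding into $b$, while removing the other two top tiles turns their pinned tiles into ground tiles and severs that chain at those groups. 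A short argument then shows that only the ancestor of $b$ closest to $b$ can stay dangerous, so at most one of the three matches is bad and at least two are safe; combining this with the two steps above completes the induction.
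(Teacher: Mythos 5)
Your proposal is correct and follows essentially the same route as the paper: show a new blocked cycle must pass through the matched group $g$, dispose of every case except a size-four $g$ with one pinned tile and three top tiles, and in that hard case use the uniqueness of the potential cycle through $g$ to see that only one of the three retainable top tiles is dangerous; the first claim then follows from the same counting argument (no match and no size-four group forces every group to have exactly one pinned and one top tile, hence a union of blocked cycles). Your digraph-of-links formalization and explicit indegree bookkeeping are just a cleaner packaging of the paper's terser prose, not a different argument.
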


\begin{proof}
We first prove the second claim.
Assume we have a layout of isolated stacks of heights one and two, 
and that some group $g$ has a match
such that playing this match will result in a new blocked cycle. 

Then there will be exactly one
tile of $g$ on the ground after playing that match. This can be
counteracted by playing another match of $g$, except if exactly three
tiles of $g$ are not on the ground. But in that case, every group which is
in the blocked cycle, except $g$, is already reduced to only two tiles,
of which one is on the ground, before removing the match of $g$. 

Hence there is only one possible blocked cycle with tiles of $g$ up to 
cyclic shift.
Thus only one of the three tiles of $g$ which are not on the ground is within
a blocked cycle after playing the other two.
Hence the blocked cycle can be broken by playing the only playable tile
of $g$ within the blocked cycle together with another tile of $g$ instead. 
This proves the second claim of theorem \ref{mahsol2}.

To prove the first claim, notice first that by the second claim,
one can avoid introducing the presence of a blocked cycle as long as 
there is a match. Thus assume that there are no matches available any 
more.
 
If more than half of the tiles are on top of a stack,
then there will be a match, because all groups have size two at least.
Furthermore, there will be a match when a group of size four is present,
because at least half of the tiles is on top of a stack.

Thus there are only stacks of height two and only groups of size two, and
one can easily see that all stacks are contained in a blocked cycle. 
This completes the proof of theorem \ref{mahsol2}.
\end{proof}

\section{A practical algorithm}

To determine whether a given board is solvable, a first idea may be to try
and play any match and next determine the solvability recursively. If the board
is not solvable any more after playing the first match, then try another match 
instead and test it by going into recursion again, until all possible matches 
are tried.

But this approach is way too slow. One way to improve the performance is to 
apply {\em cleaning operations} initially (and thus implicitly in the recursion 
as well). A first cleaning operation is to play all remaining (two or four)
tiles of a group when this is possible. Another cleaning operation is the following.
If three tiles of a group are playable and at least one of them does not block 
any other tile, then play the other two.

Another speedup is obtained by taking into account that two matches of different groups
can be played in any order, and that the order does not affect the solvability. For this
purpose, one can order the groups and require that matches which are playable at some
stage are played in increasing order of the groups. Thus if one is about to play a match 
within a group $g$, all current matches of groups lower than $g$ will be forbidden in the 
branch of the search tree after playing the above match within $g$.

Notice that every group has six possible matches. But when playing a match of two tiles of a group is forbidden, it does not make sense when the other two tiles of that group are still
played together. Thus there are only three possible matches to distinguish. In other words,
there are three possible {\em pairings} of a group of four tiles, namely 
$\{\{1,2\},\{3,4\}\}$, $\{\{1,3\},\{2,4\}\}$ and $\{\{1,4\},\{2,3\}\}$. When a match within
a group $g$ is played, a pairing of $g$ is chosen and for groups of lower order than $g$,
one pairing is marked as unallowed in case such a group has exactly one match. If a group 
of lower order than $g$ has more than one match, then all pairings of that group will get
forbidden, which means that group $g$ is not the right group to play at the moment.

This is more or less the algorithm of \cite{gimeno}. Although the algorithm is not very 
fast, it can be used to evaluate initial turtle layouts, most of them within a coffee break, 
but some of them take more than a day. Others did similar things, and subsequently
applied low-level improvements, include writing efficient assembly code and searching with multiple theads.

This is however not the best idea at this stage, because high-level 
improvements are far more effective and still available. We discuss two of them
which lead to an algorithm that solves a typical board in less than a second 
and a worst-case board in about a minute.

\subsection{A pruning criterion}

The above cleaning operations can be seen as a pruning criterion, since they essentially spot
groups whose removal is not problematic in any sense and branching within such groups is eliminated. A more direct pruning criterion which can be tried before going into recursion 
is the following.

\begin{pruningscan}
Try if you can clean the board by playing the first two tiles of each group simultaneously (choose any match), and playing the third and fourth tile individually (in case the group has four tiles). 
\end{pruningscan}

\noindent
The rules in the pruning scan are not only more gracious than the original rules, 
but it is also impossible to play the wrong match. Therefore, 
if the pruning scan has a negative answer, then the board will not be solvable 
and one can search further without going into recursion.

Furthermore, for groups where cleaning can be applied upon, this pruning criterion 
works as well, because such groups do not affect the result of the above pruning 
scan. But subsequent optimizations, which will be discussed in the next subsection,
will counteract the cleaning operations.

\subsection{Critical groups}

Instead of choosing matches to play recursively, one can also choose group pairings
recursively. With 36 groups, one gets a search tree of size $3^{36}$, since there
are three possible pairings for each group. The advantage of a group-directed
search tree with respect to a match-directed search tree is that one can choose the
order of groups in a group-directed search tree in an arbitrary manner. With a
match-directed search tree, the order of groups corresponds to the order of 
playing of the first pair of tiles of them.

With a group-directed search tree, no tiles are actually removed, which makes
that the prune scans will not get smaller. But a scan of a board with half of the tiles
removed in only a factor two faster than a scan with all tiles, which is peanuts
in this context. Furthermore, groups which are associated 
a pairing are removed corresponding to that pairing in a pruning scan and no individual 
tile removal are allowed for them, otherwise one does not progress in the search.

To improve the effect of the pruning, we try to choose a critical group and 
prioritize it by expanding it in the search tree (on top of the current stage). 
One way to find a critical group is to choose a group that
allows the minimum number of pairings in the pruning scan. If all unpaired
groups allow all three pairings, then one can try to find a group whose four tiles
cannot be removed simultaneously during a pruning scan.

This heuristic for finding critical groups is a significant improvement of the
algorithm, but some boards still take quite some time. For that reason, we present
a totally different heuristic for finding critical groups, one which appears even 
better in practice since all boards seem to be solved within a reasonable amount of time.

Suppose that we are either on top of the search, or that the last group which is 
assigned a pairing has a pairing greater than one. We choose the next group
without any heuristic yet, and assign it pairing one. Next, we perform a pruning 
scan. If the pruning scan is effective, then we decide that the chosen group is 
critical, assign it pairing two (the next pairing), and go further into recursion.

If the pruning scan is not effective, then we add another group with pairing one to
the search tree, and repeat adding groups until we get an effective pruning scan (or
a solution). When we hit an effective pruning scan, we assign the last added group 
pairing two just as above, but we do the following with all groups that we just added to 
the search tree with pairing one. We remove all such groups that are not needed for the 
pruning scan to be effective from the search tree. 

Since there might be more minimal combinations of groups that keep the pruning scan 
effective, we have to be more specific here.
We remove the groups in backward order, starting from the last added group
(which is of course necessary for the pruning), going up to the previous group 
with pairing larger than one (or the beginning of the search tree), from where we started adding
groups with pairing one. Thus we adapt the search tree 
backwards. For the groups that are removed from the search tree, we cycle tiles
two, three and four either forward or backward, hoping that they appear critical
later on due to a different first pairing. 

A more progressive rearrangement strategy on the search tree 
(based on maintaining a pruning), where the progress in the 
ternary search tree (which is a ternary fraction) is maximized, appeared 
overdone with respect to speed. 

\subsection{Random solving}

One can also try to solve the board by chosing matches randomly. This is likely
to fail, but then one can repeat the attempt another time. In the current 
implementation of the solver, $\lfloor 1.2^{36} \rfloor = 708$ attempts are done 
to solve a board randomly (since there are $36$ groups which have still four tiles
initially). The random attempts are combined with the above cleaning heuristic,
but first, a pruning scan on top level is performed, since some layouts
are so impossible that almost all boards can be discarded with such a scan.

Already twenty years ago, Ken McDonald wrote a solver which tries to solve boards
by random match selection, see \cite{mjsolv12}. He also remarks that the difficulty 
of solvable boards can be measured by the fraction of succesful attempts of solving.

\subsection{Layouts and probabilities}

Below follow probabilities that a random board of some layout of Xmahjongg 3 
cannot be won even when peeking is allowed. All results are based on scans of 
100,000 boards, except the default layout which is based on a scan of 10,000,000 
boards. This latter scan took about 40 hours on a single thread of a Xeon L5420.
From the other layouts, the ox took with about 110 minutes the most time, quickly
followed by the bridge layout, which has significantly more really hard boards 
than any of the other layouts. 

\begin{center}
\begin{tabular}{lll}
Default: 2.95\% \\
\phantom{Hourglass: 100\% (all)} &
\phantom{Hourglass: 100\% (all)} &
\phantom{Hourglass: 100\% (all)} \\
Arena: 2.6\% &      Farandole: 7.9\% &       Rat: 5.1\% \\ 
Arrow: 8.2\% &      Hare: 18\% &             Rooster: 22\% \\
Boar: 4.7\% &       Horse: 20\% &            Snake: 4.2\% \\
Bridge: 32\% &      Hourglass: 100\% (all) & Theater: 0.62\% \\
Ceremonial: 1.8\% & Monkey: 9.9\% &          Tiger: 22\% \\ 
Deepwell: 6.0\% &   Ox: 47\% &               Wedges: 4.8\% \\
Dog: 7.3\% &        Papillon: 100\% (all) \\
Dragon: 7.5\% &     Ram: 6.9\% \\
\end{tabular}
\end{center}

\noindent
The results for transposed layouts and the source can be found on the authors homepage
\cite{mjsolver}, as well as a DLL for solving and a hacked version of Xmahjongg 3.6.1 
with the solver.

\end{document}